\newtheorem{theorem}{Theorem}[section]
\newtheorem{lemma}{Lemma}[section]
\newenvironment{proof}[1][Proof]{\noindent\textbf{#1.} }{\ \rule{0.5em}{0.5em}}
\newcommand*{\SuperScriptSameStyle}[1]{%
  \ensuremath{%
    \mathchoice
      {{}^{\displaystyle #1}}%
      {{}^{\textstyle #1}}%
      {{}^{\scriptstyle #1}}%
      {{}^{\scriptscriptstyle #1}}%
  }%
}
\newcommand*{\oneS}{\SuperScriptSameStyle{*}}
\newcommand*{\twoS}{\SuperScriptSameStyle{**}}
\newcommand*{\threeS}{\SuperScriptSameStyle{*{*}*}}
\def\zhao{\textcolor{blue}}
\begin{document}

\title{A Sequential Learning Procedure with Applications to Online Sales Examination} \author{ Jun Hu\footnote{Jun Hu is an Assistant Professor in the Department of Mathematics and Statistics, Oakland University, 146 Library Drive, Rochester, MI 48309, USA. Tel.:~248-370-3434. Email address: junhu@oakland.edu.}, Yan Zhuang\footnote{Yan Zhuang is an Assistant Professor in the Mathematics and Statistics Department, Connecticut College.}, Shunan Zhao\footnote{Shunan Zhao is an Associate Professor in the Department of Economics, Oakland University.}  
}

\date{}

\maketitle

\bigskip

\begin{abstract}

In this paper, we consider the problem of estimating parameters in a linear regression model. We propose a sequential learning procedure to determine the sample size for achieving a given small estimation risk, under the widely used Gauss-Markov setup with independent normal errors. The procedure is proven to enjoy the second-order efficiency and risk-efficiency properties, which are validated through Monte Carlo simulation studies. Using e-commerce data, we implement the procedure to examine the influential factors of online sales.

\bigskip

\noindent \emph{Keywords}: Sequential sampling; Linear model; Real data analyses; Online sales data.

\bigskip

\noindent \textbf{Mathematics Subject Classifications} \ 62L12; 62L05; 62L10 \ 
\end{abstract}

\bigskip


\setcounter{section}{0}
\setcounter{equation}{0}
\section{Introduction}\label{Sect. 1}

Regression modeling is a statistical technique for investigating the relationship between variables. There is a vast application of regression analysis in almost every field such as biology, chemistry, psychology, physics, and economics. 

In order to estimate the linear model parameters, the least-squared estimation method is widely used. The estimation is commonly conducted using existing available archival data. However, in studies involving data retrieval, it is often the case that one may need to decide, at least roughly, the size of the sample, in order to have an idea of how long the data collection will take and how much money or how many resources it will need. Then, a natural question to ask is how many observations we would need in order to have an accurate or reliable estimation of the linear model parameters. Facing such a question is becoming more common nowadays as the concept of ``big data" is starting to gain traction among researchers in all fields. While textual, audio, and visual data collection through web scrapping is entering the toolbox of social science scholars, data retrieval can be an essential part of research.

To answer this question, sequential estimation procedures emerge. The idea of sequential estimation was first brought up by \citeauthor{Stein (1945)} (\citeyear{Stein (1945)}, \citeyear{Stein (1949)}) as a solution to figure out the sample size for estimating a normal mean with a fixed margin of error when the variance remains unknown. This is known as Stein's two-stage sampling procedure which requires two samples: a pilot sample and a second-stage sample whose collection is determined by the pilot sample. Afterwards, many sequential procedures were developed. Purely sequential sampling methodologies were proposed for the problem of constructing fixed-width confidence intervals in \cite{Anscombe (1953)} and \cite{Chow and Robbins (1965)}. The idea was to sample one item at a time until it achieved a certain accuracy requirement. In a slightly different direction, a minimum risk point estimation (MRPE) problem was formulated by \cite{Robbins (1959)}, in which a purely sequential sampling methodology was put forward to estimate an unknown normal mean when the variance was assumed unknown, under the absolute error loss plus a linear cost of sampling.

Modified and innovative sequential methodologies have been continuously developed after the aforementioned studies. \cite{Hall (1981)} developed a three-stage methodology which was named \textit{triple sampling} for mean estimation. He then built ideas on an accelerated sequential rule and outlined some basic properties in his later work \citep{Hall (1983)}. \cite{Mukhopadhyay and Solanky (1991)} further investigated the accelerated sequential idea and developed some unified rules. They claimed that accelerated sequential methodologies could save a considerable amount of sampling operations and often enjoy similar properties as those of purely sequential procedures. Most recently, \cite{Mukhopadhyay and Wang (2020a)} brought up a new sequential sampling scheme that considered recording $k$ observations at a time. \cite{Hu (2020)} developed a double-sequential sampling scheme that respectively samples $k$ observations in the first step and then one observation at a time. \cite{Hu and Zhuang (2022)} proposed a general sequential sampling scheme, which incorporates four different types of sampling procedures: the purely sequential sampling procedure; the ordinary accelerated sequential sampling procedure; the k-at-a-time purely sequential sampling procedure; and the k-at-a-time accelerated sequential sampling procedure. 

In this paper, we will focus on developing a sequential learning procedure for estimating linear regression parameters with a bounded estimation risk. We will provide a general sequential sampling scheme so that one can customize it to a certain sampling scenario. For example, if the sample items come in batches, one may take a bunch of items at a time; if one wants to save time for sampling logistics, one may collect a relatively small number of items one by one to get the first part of the sample, and then make good the shortfall of the projected sample size by collecting the second part of the sample all in one batch. We show the finite-sample performance of the proposed learning procedure using a set of simulations. In a real-data application, we showcase the learning procedure by applying it to study factors that affect online sales of electronic products on Tmall.com, one of the largest online commerce platforms in China. Given that all sales data need to be extracted through Tmall's official website using a program on a daily basis, the application provides a good illustration of the usefulness of our learning method to determine the required sample size.

The rest of the paper is organized as follows. Section \ref{Sect. 2} brings up the idea of point estimation of linear model coefficients using sequential learning procedures. Section \ref{Sect. 3} summarizes our simulation studies. A real data illustration is discussed in Section \ref{Sect. 4}. Section \ref{Sect. 5} lays down essential proofs of the theoretical results. Section \ref{Sect. 6} concludes the paper with some final thoughts.



\setcounter{equation}{0}
\section{Sequential Point Estimation in a Linear Model}\label{Sect. 2}

Suppose that we have $p(\ge1)$ explanatory variables, denoted by $\bm{X}=(X_1,...,X_p)$, which leads to the response $Y_i$ when
$$X_1 = x_{i1},...,X_p=x_{ip}, \ i=1,2,...$$
Having recored $n$ independent observations
$$\left( Y_i,x_{i1},...,x_{ip}\right), \ i=1,2,...,n,$$
let us consider the following standard linear model
\begin{equation}\label{lm}
{\bm{Y}}_n = {\mathbb{X}}_{n}\bm{\beta} + {\bm{\varepsilon}}_{n},
\end{equation}
where ${\bm{Y}}_n = (Y_1,Y_2,...,Y_i,...,Y_n)^{\prime} \in \mathbb{R}^n$ is the response vector,
$${\mathbb{X}}_{n} = 
\begin{pmatrix}
x_{11} & \cdots & x_{1p}\\
x_{21} & \cdots & x_{2p}\\
\vdots &        & \vdots\\
x_{i1} & \cdots & x_{ip}\\
\vdots &        & \vdots\\
x_{n1} & \cdots & x_{np}\\
\end{pmatrix}
_{n\times p} \in \mathbb{R}^{n\times p}
$$
is the design matrix, ${\bm{\beta}} = (\beta_1,...,\beta_p)^{\prime} \in \mathbb{R}^p$ is the regression parameter vector, and ${\bm{\varepsilon}}_{n}=(\varepsilon_1,\varepsilon_2,...,\varepsilon_i,...,\varepsilon_n)^{\prime} \in \mathbb{R}^n$ is the error vector, respectively. Note that when $x_{i1}=1,i=1,...,n$, $\beta_1$ is then the intercept term. However, we do not make any distinction whether the linear model \eqref{lm} includes an intercept or not and continue to use this general form for convenience.

For illustrative purposes, we further assume that the design matrix $\mathbb{X}_n$ is full rank, that is, $\text{rank}(\mathbb{X}_n)=p, n\ge p+1$, and ${\bm{\varepsilon}}_{n} \sim N_n\left( {\bm{0}}_{n\times1}, \sigma^2\mathbb{I}_{n\times n} \right)$, that is, the error terms are independent and identically distributed (i.i.d.) $N(0,\sigma^2)$ random variables, where $\sigma^2$ is unknown to us. Under this widely-used Gauss-Markov setup with independent normal errors, a few well-known results are listed below for one's reference. 
\begin{enumerate}
	\item The least square estimator of $\bm{\beta}$ and its distribution:
	$$\hat{\bm{\beta}}_n = \left( \mathbb{X}_{n}^{\prime}\mathbb{X}_n \right)^{-1}\mathbb{X}_{n}^{\prime}{\bm{Y}}_n \sim N_p\left( \bm{\beta}, \sigma^2\left( \mathbb{X}_{n}^{\prime}\mathbb{X}_n \right)^{-1} \right).$$
	\item The estimator of $\sigma^2$ and its distribution:
	$$\hat{\sigma}^2 = S^2_n = (n-p)^{-1}\left( {\bm{Y}}_n - \mathbb{X}_n\hat{\bm{\beta}}_n \right)^{\prime}\left( {\bm{Y}}_n - \mathbb{X}_n\hat{\bm{\beta}}_n \right),$$
	and $$(n-p)S_n^2 \sim \sigma^2\chi^2_{n-p}.$$
	\item $\hat{\bm{\beta}}_n$ and $S_n^2$ are the uniformly minimum variance unbiased estimators (UMVUEs) of $\bm{\beta}$ and $\sigma^2$, respectively.
\end{enumerate}

Next, in the spirit of \cite{Mukhopadhyay (1974)} we introduce a loss function based on the margin of estimation errors:
\begin{equation}\label{loss}
L\left( \bm{\beta},\hat{\bm{\beta}}_n \right) = n^{-1}\left(\hat{\bm{\beta}}_n-\bm{\beta}\right)^{\prime}\left( \mathbb{X}_{n}^{\prime}\mathbb{X}_n \right)^{-1}\left(\hat{\bm{\beta}}_n-\bm{\beta}\right).
\end{equation}
The associated risk is accordingly given by
\begin{equation}\label{risk}
\begin{split}
\mathsf{E}\left[ L\left( \bm{\beta},\hat{\bm{\beta}}_n \right) \right] &= n^{-1}\mathsf{tr}\left\{\mathsf{E}\left[\left(\hat{\bm{\beta}}_n-\bm{\beta}\right)^{\prime}\left( \mathbb{X}_{n}^{\prime}\mathbb{X}_n \right)^{-1}\left(\hat{\bm{\beta}}_n-\bm{\beta}\right)\right]\right\}\\
&= n^{-1}\mathsf{E}\left\{\mathsf{tr}\left[\left( \mathbb{X}_{n}^{\prime}\mathbb{X}_n \right)^{-1}\left(\hat{\bm{\beta}}_n-\bm{\beta}\right)\left(\hat{\bm{\beta}}_n-\bm{\beta}\right)^{\prime}\right]\right\}\\
&= n^{-1}\mathsf{tr}\left\{\left( \mathbb{X}_{n}^{\prime}\mathbb{X}_n \right)^{-1}\mathsf{E}\left[\left(\hat{\bm{\beta}}_n-\bm{\beta}\right)\left(\hat{\bm{\beta}}_n-\bm{\beta}\right)^{\prime}\right]\right\}\\
&= n^{-1}\sigma^2\mathsf{tr}\left(\mathbb{I}_{p\times p}\right) = n^{-1}p\sigma^2.
\end{split}
\end{equation}
In this article, our goal is to make the risk not exceed a predetermined small level $b(>0)$, that is,
$$\mathsf{E}\left[ L\left( \bm{\beta},\hat{\bm{\beta}}_n \right) \right] = n^{-1}p\sigma^2 \le b.$$
We can then obtain the required minimum sample size 
\begin{equation} \label{opt}
n^* \equiv n^*(b) = b^{-1}p\sigma^2,
\end{equation} 
by tacitly disregarding the fact that $n^*$ may not be an integer. We define $n^*$ as the \textit{optimal sample size}, had $\sigma^2$ been known. However, since $\sigma^2$ is actually unknown to researchers and can be arbitrarily large, there exists no fixed-sample-size procedure that can bound the risk as desired. Consequently, sequential learning methods are necessary, where we estimate $\sigma^2$ by updating its UMVUE, $S^2_n$, at every stage as needed. 

In light of \cite{Hu and Zhuang (2022)}, we propose a sequential learning procedure $\mathcal{P}(\rho,k)$, which is efficient and enjoys operational convenience:
\begin{equation}\label{proc}
\begin{split}
T_{\mathcal{P}(\rho,k)} &\equiv T_{\mathcal{P}(\rho,k)}(b)=\inf \left\{n\geq0:m+kn\ge \rho b^{-1}pS^2_{m+kn}\right\},\\
N^*_{\mathcal{P}(\rho,k)} &\equiv N^*_{\mathcal{P}(\rho,k)}(b)=\rho^{-1}(m+kT_{\mathcal{P}(\rho ,k)}),\\
N_{\mathcal{P}(\rho,k)} &\equiv N_{\mathcal{P}(\rho,k)}(b)=\left\lfloor N^*_{\mathcal{P}(\rho,k)} \right\rfloor +1.
\end{split}
\end{equation}
Here, $0<\rho \le 1$ indicates a prefixed proportion, $k\ge 1$ indicates the number of observations taken at a time successively in the sequential sampling stage, $m\ge p+1$ indicates a pilot sample size picked in such a way that $m-p\equiv 0{\pmod k}$, and $\left\lfloor u\right\rfloor$ represents the largest integer that is strictly smaller than $u$. Denoting that $m-p=m_{0}k$ for some positive integer $m_{0}$, we further assume that the following limit operations hold:
\begin{equation}\label{limop}
m_{0}\rightarrow \infty, m=m_{0}k+p \rightarrow \infty, b\equiv b(m)=O(m^{-r}), n^{*}=O(m^{r}), \text{ and }\lim\sup\frac{m}{n^{*}}<\rho,
\end{equation}
where $r>1$ is a fixed constant. 

The sequential learning procedure $\mathcal{P}(\rho,k)$ given in \eqref{proc} is implemented as follows: Starting with a pilot sample of $m(=m_{0}k+p)$ observations, $(Y_{i},x_{i1},...,x_{ip}),i=1,...,m$, we sample $k$ observations at a time sequentially as needed and determine $T_{\mathcal{P}(\rho,k)}$. Next, we compute $N^*_{\mathcal{P}(\rho,k)}$ and $N_{\mathcal{P}(\rho,k)}$, and take additional observations as needed in one batch to make good the shortfall of this projected total sample size. It is clear that $\mathsf{P}\left(N_{\mathcal{P}(\rho,k)}<\infty \right)=1$, so this procedure terminates with probability one (w.p.1). Upon termination with the fully gathered data
\begin{equation*}
\left( Y_{i},x_{i1},...,x_{ip}\right), i =1,...,m,...,N_{\mathcal{P}(\rho,k)},
\end{equation*}
we compute $\hat{\bm{\beta}} \equiv \hat{\bm{\beta}}_{N_{\mathcal{P}(\rho,k)}}$, the least square estimate of the regression parameters, and construct the linear model
$\hat{y}=\bm{X}\hat{\bm{\beta}}$.

Observe that $N_{\mathcal{P}(\rho,k)} \uparrow \infty $ w.p.1 as $b\downarrow 0$. We are now in a position to state the efficiency properties of this newly developed sequential learning procedure in the following theorem.
\begin{theorem}\label{Thm}
For the sequential learning procedure $\mathcal{P}(\rho,k)$ given in \eqref{proc}, under the limit operations \eqref{limop} we have: 
\begin{equation}\label{eff}
\mathsf{E}\left[ N^*_{\mathcal{P}(\rho,k)} - n^* \right] = \rho^{-1}\eta(k)+o(1),
\end{equation}
where $\eta(k)=\frac{k-2}{2}-\sum_{i=1}^{\infty}n^{-1}\mathsf{E}\left[\left\{\chi^2_{kn}-2kn\right\}^{+}\right]$, and the average achieved risk
\begin{equation}\label{regret}
\mathsf{E}\left[ L\left( \bm{\beta},\hat{\bm{\beta}}_{N_{\mathcal{P}(\rho,k)}} \right)  \right] = b+o(b).
\end{equation}
\end{theorem}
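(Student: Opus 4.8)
The plan is to recognize this as a nonlinear renewal theory problem in disguise, following the classical template of \cite{Chow and Robbins (1965)}, \cite{Robbins (1959)}, and the refinements in \cite{Mukhopadhyay (1974)} and \cite{Hu and Zhuang (2022)}. First I would rewrite the stopping rule \eqref{proc} in a canonical form. Setting $W_n = kn + m$ as the running sample size after $n$ sequential batches, the boundary condition $m + kn \ge \rho b^{-1} p S^2_{m+kn}$ becomes $W_n \ge \rho\, n^*\, (S^2_{W_n}/\sigma^2)$, since $n^* = b^{-1}p\sigma^2$. Because $(w-p)S^2_w/\sigma^2 \sim \chi^2_{w-p}$ and the $\chi^2_{w-p}$ variate is a sum of i.i.d. mean-one random variables, the quantity $S^2_{W_n}/\sigma^2$ is (asymptotically) a normalized random walk, so $T_{\mathcal{P}(\rho,k)}$ is a stopping time of the type to which Woodroofe-style nonlinear renewal theory applies: the overshoot has a limiting distribution and a computable expected value. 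The limit operations \eqref{limop}, in particular $b = O(m^{-r})$ with $r>1$ and $\limsup m/n^* < \rho$, are precisely the regularity conditions needed so that the boundary crossing happens "in the bulk" (the pilot does not overshoot prematurely, and the relevant sums obey the required moment and uniform-integrability conditions).

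The key steps, in order, are as follows. (i) Establish basic first-order asymptotics: show $N^*_{\mathcal{P}(\rho,k)}/n^* \to 1$ a.s.\ and in $L^1$ as $b \downarrow 0$, together with the companion statement $\mathsf{E}[N^*_{\mathcal{P}(\rho,k)}/n^*] \to 1$; this uses the strong law for the $\chi^2$ partial sums and a standard uniform-integrability argument exploiting $r > 1$. (ii) Derive the second-order expansion \eqref{eff}: write $kT_{\mathcal{P}(\rho,k)} + m = \rho n^* (S^2/\sigma^2) + R_b$ where $R_b$ is the overshoot, take expectations, and apply the nonlinear renewal theorem to identify $\lim \mathsf{E}[R_b]$. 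The term $\tfrac{k-2}{2}$ is the usual contribution from the mean overshoot of a random walk with increments having the relevant mean and variance (here the factor $k$ enters because we sample $k$ at a time, and the $-2$ is the Robbins--Siegmund/Woodroofe correction), while the series $\sum_{i=1}^\infty n^{-1}\mathsf{E}[\{\chi^2_{kn} - 2kn\}^+]$ is exactly the "early-stopping" correction accounting for the infimum in the definition of $T$ — it is the probability-weighted deficit from the event that the boundary is crossed at small $n$. Dividing by $\rho$ (from $N^* = \rho^{-1}(m + kT)$) gives \eqref{eff}. (iii) For the risk statement \eqref{regret}: by the loss \eqref{loss} and the computation \eqref{risk}, conditionally on $N_{\mathcal{P}(\rho,k)} = n$ (which is a stopping time, but $\hat{\bm\beta}_n - \bm\beta$ given the stopping is still the usual Gaussian conditional on the design), one has $\mathsf{E}[L \mid N] = p\sigma^2 / N$, hence $\mathsf{E}[L] = p\sigma^2\, \mathsf{E}[N^{-1}]$. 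Then $\mathsf{E}[L]/b = n^* \mathsf{E}[N^{-1}] = \mathsf{E}[n^*/N]$, and since $N/n^* \to 1$ a.s.\ with $n^*/N$ uniformly integrable (again via \eqref{limop}), this converges to $1$, giving $\mathsf{E}[L] = b + o(b)$.

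The main obstacle is step (ii) — making the nonlinear renewal argument rigorous and correctly bookkeeping the two pieces of $\eta(k)$. One must verify the hypotheses of the nonlinear renewal theorem (e.g.\ the decomposition of $\log S^2_{W_n}$ or of $W_n - \rho n^* S^2_{W_n}/\sigma^2$ into a linear random walk plus a slowly-changing perturbation whose maximum is $o_p(\sqrt{n^*})$ and which is uniformly integrable), and one must carefully handle the discreteness introduced by sampling $k$ at a time, which shifts the lattice/span of the random walk and is the source of the $k$-dependence in both $\tfrac{k-2}{2}$ and the series. A delicate sub-point is justifying that the contribution of the event $\{T = 0\}$ and small values of $T$ is captured exactly by the infinite series rather than merely being negligible; this requires a non-asymptotic tail bound on $\chi^2$ to show the series converges and that the error terms are genuinely $o(1)$ under $b = O(m^{-r})$, $r>1$. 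Once step (ii) is in hand, steps (i) and (iii) are routine by comparison. I would also remark that since $\hat{\bm\beta}_n$ and $S^2_n$ are independent for each fixed $n$ (normal theory), and the stopping time $N$ is a function of the $S^2$-sequence (and the design) only through $T$, the conditional-risk identity in step (iii) goes through cleanly; any measure-theoretic subtlety there is handled exactly as in \cite{Mukhopadhyay (1974)}.
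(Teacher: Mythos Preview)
Your proposal is correct and follows essentially the same route as the paper: it too rewrites the stopping time in canonical random-walk form and invokes the nonlinear renewal/Woodroofe machinery (in the paper, via a direct citation to Theorem~2.1 of \cite{Hu and Zhuang (2022)} and \cite{Woodroofe (1977)}) for \eqref{eff}, and it too uses the independence of $\hat{\bm\beta}_n$ from the $S^2$-sequence (Lemma~\ref{lemma1}, attributed to \cite{Mukhopadhyay (1974)}) to reduce \eqref{regret} to $b\,\mathsf{E}[n^*/N_{\mathcal{P}(\rho,k)}]$. The only implementation detail worth noting is that where you appeal abstractly to uniform integrability of $n^*/N_{\mathcal{P}(\rho,k)}$, the paper makes this concrete by splitting on $\{N_{\mathcal{P}(\rho,k)} \le \tfrac{1}{2}n^*\}$, proving the tail bound $\mathsf{P}(N_{\mathcal{P}(\rho,k)} \le \gamma n^*) = O(n^{*-s/(2r)})$ via Kolmogorov's inequality for reverse martingales (Lemma~\ref{lemma2}), and then applying dominated convergence on the complement where $n^*/N_{\mathcal{P}(\rho,k)} \le 2$.
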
 

It is worth mentioning that for any fixed positive integer $k$, $\eta(k)$ in Theorem \ref{Thm} is a constant but cannot be evaluated in finite terms. We set out to approximate its value numerically by writing our own R codes.\footnote{The codes are available upon request.} In the spirits of \citet[Table 3.8.1]{Mukhopadhyay and Solanky (1994)} and \citet[Table 1]{Hu and Zhuang (2022)}, any term whose value is smaller than $10^{-15}$ in magnitude is dropped from the calculation of the infinite sum in $\eta(k)$. A few values of $\eta(k)$ are provided in Table \ref{Table 1}. 

\begin{table}[h!] 
\footnotesize
\captionsetup{font=footnotesize}
\caption{$\eta(k)$ approximations in Theorem \ref{Thm}}
\label{Table 1}\par
\centerline{\tabcolsep=3truept
\begin{tabular}{crccr}
\hline
\multicolumn{1}{c}{$k$} & \multicolumn{1}{c}{$\eta(k)$} &    ~~~   & \multicolumn{1}{c}{$k$} & \multicolumn{1}{c}{$\eta(k)$} \\ 
\hline
$1$ & \multicolumn{1}{r}{$-1.1826$} & & $11$ & $4.4204$ \\ 
$2$ & \multicolumn{1}{r}{$-0.5103$} & & $12$ & $4.9334$ \\ 
$3$ & \multicolumn{1}{r}{$0.1045$} & & $13$ & $5.4442$ \\ 
$4$ & \multicolumn{1}{r}{$0.6866$} & & $14$ & $5.9531$ \\ 
$5$ & \multicolumn{1}{r}{$1.2482$} & & $15$ & $6.4606$ \\ 
$6$ & \multicolumn{1}{r}{$1.7951$} & & $16$ & $6.9669$  \\ 
$7$ & \multicolumn{1}{r}{$2.3321$} & & $17$ & $7.4722$ \\ 
$8$ & \multicolumn{1}{r}{$2.8616$} & & $18$ & $7.9765$ \\ 
$9$ & \multicolumn{1}{r}{$3.3853$} & & $19$ & $8.4802$ \\ 
$10$ & \multicolumn{1}{r}{$3.9047$} & & $ 20$ & $8.9833$ \\ 
\hline
\end{tabular}}
\end{table}

Equation \eqref{eff} shows that for a sufficiently small $b$, the difference between the expected final sample size and the optimal sample size $n^*$ is about $\rho^{-1}\eta(k)$. This demonstrates the efficiency of our sequential learning procedure, for there exists no serious oversampling problem. Equation \eqref{regret} provides us with reasonable assurances that the achieved risk is expected to approach the target $b$ with a difference up to a small term given by $o(b)$. 



\setcounter{equation}{0}
\section{Simulations}\label{Sect. 3}

In this section, we conduct a series of Monte Carlo simulations to validate our theoretical results of the proposed sequential learning procedure as per \eqref{proc}.

First of all, we set up our linear regression model:  
$$Y=100-4X_1+3X_2+2X_3+\varepsilon,$$ 
where $X_i,i=1,2,3$, are explanatory variables whose values are generated from various normal distributions, that is, $N(\mu_i,\sigma^2_i)$ with $\mu_i$ being the mean value and $\sigma_i$ being the standard deviation. In addition, the error term $\varepsilon$ also follows a normal distribution but with the mean value being $0$ specifically, that is, $N(0,\sigma^2)$. We have conducted a comprehensive simulation study and tried different combinations for the values of $\mu_i$, $\sigma_i$, and $\sigma$. The results are very consistent across the board. One may also note that for brevity alone, we only illustrate this three-predictor linear regression model with the given coefficients values. The results of simulations for linear regression models with other coefficients values performed similarly. Thus, in this paper, we only report the simulated results from the following distributions: $X_1\sim N(50,9)$, $X_2\sim N(200,64)$, $X_3\sim N(100,25)$, and $\varepsilon \sim N(0,2)$. Moreover, we have considered a wide range of $b$ values including $\{0.4, 0.2, 0.1, 0.08, 0.04,0.02,0.01\}$ to show that the procedure works well for small, moderate, and large sample sizes. We demonstrate the simulated results with fixed $k=5,10,20$, $\rho=0.8$, and $m_0=2$. The whole sequential learning procedure for estimation is replicated $R=10,000$ times.

In \eqref{form}, we include a list of explanations for the main statistics in Table \ref{sim_results} as a reference.

\begin{equation}\label{form}
\begin{tabular}{cl}
\hline
$n^*$ & the required minimal sample size given that $\sigma$ is known \eqref{opt}\\
$N_{i}$ & the estimated final sample size from procedure \eqref{proc} for replication $i$, $i=1,...,R$\\
$\bar{N}$ & $=R^{-1}\Sigma_{i=1}^{R}N_{i}$, an estimate of $n^*$\\
$s.e_{\bar{N}}$ & $=\left\{  \tfrac{1}{(R^{2}-R)}\Sigma_{i=1}^{R}(N_{i}-\bar{N})^{2}\right\}^{1/2}$, the estimated standard error of $\bar{N}$\\
$\bar{\hat{\sigma}}$ & $=R^{-1}\Sigma_{i=1}^{R}\hat{\sigma}_{i}$, an estimate of $\sigma$\\
$s.e._{\hat{\sigma}}$ & $=\left\{  \tfrac{1}{(R^{2}-R)}\Sigma_{i=1}^{R}(\hat{\sigma}_{i}-\bar{\hat{\sigma}})^{2}\right\}^{1/2}$, the estimated standard error of $\bar{\hat{\sigma}}$\\
$r^*$ & $=\mathsf{E}\left[ L\left( \bm{\beta},\hat{\bm{\beta}}_n \right) \right]$, as per \eqref{risk}\\
$r_i$ & the achieved risk as per \eqref{loss} for repliaction $i$, $i=1,...,R$\\ 
$\bar{r}$ & $=R^{-1}\sum_{i=1}^{R}r_i$, an estimate of $r^*$ \\
$s.e._{\bar{r}}$ & $=\left\{  \frac{1}{(R^{2}-R)}\Sigma_{i=1}^{R}({r}_{i}-\bar{{r}})^{2}\right\}^{1/2}$, the estimated standard error of $\bar{r}$\\
\hline
\end{tabular}
\end{equation}

From Table \ref{sim_results}, one can see that as $b$ gets smaller, $\bar{N}/n^*$ gets closer to $1$. Moreover, $\bar{N}-n^*$ hangs around its corresponding $\rho^{-1}\eta(k)$ value across the board. The mean value of the estimated $\sigma$ hangs tightly around its true value $2$ with small standard errors. The estimated risks from different scenarios are always close to the optimal risk as per \eqref{risk} with very small standard errors.

\begin{table}[p]
  \centering
  \caption{Simulated results implementing the sequential learning procedure \eqref{proc} with $\rho=0.8$, $m_0=2$ under 10,000 runs}
   \smallskip
   \resizebox{\columnwidth}{!}{%
    \begin{tabular}{ccccccccccc}
    \hline
          &       &       & \multicolumn{6}{c}{$k=5,\rho^{-1}\eta(5)=1.560$}                       &       &  \\
    \multicolumn{1}{c}{$b$} & \multicolumn{1}{c}{$n^*$} & \multicolumn{1}{c}{$\bar{N}$} & \multicolumn{1}{c}{${s.e.}_{\bar{N}}$} &    \multicolumn{1}{c}{$\bar{N}/n^*$}    &  \multicolumn{1}{c}{$\bar{N}-n^*$}       & \multicolumn{1}{c}{$\bar{\hat{\sigma}}$} & \multicolumn{1}{c}{${s.e.}_{\hat{\sigma}}$} & \multicolumn{1}{c}{$r^*$} & \multicolumn{1}{c}{$\bar{r}$} & \multicolumn{1}{c}{${s.e.}_{\bar{r}}$} \\
    \hline
    0.4   & 40    & 41.089 & 0.117 & 1.027 & 1.089 & 1.9194 & 0.0027 & 0.40  & 0.374 & 0.0008 \\
    0.2   & 80    & 81.044 & 0.163 & 1.013 & 1.044 & 1.9600 & 0.0019 & 0.20  & 0.194 & 0.0003 \\
    0.1   & 160   & 161.452 & 0.209 & 1.009 & 1.452 & 1.9823 & 0.0012 & 0.10  & 0.098 & 0.0001 \\
    0.08  & 200   & 202.256 & 0.235 & 1.011 & 2.256 & 1.9886 & 0.0010 & 0.08  & 0.079 & 0.0000 \\
    0.04  & 400   & 402.279 & 0.321 & 1.006 & 2.279 & 1.9948 & 0.0007 & 0.04  & 0.040 & 0.0000 \\
    0.02  & 800   & 802.372 & 0.453 & 1.003 & 2.372 & 1.9974 & 0.0005 & 0.02  & 0.020 & 0.0000 \\
    0.01  & 1600  & 1601.390 & 0.635 & 1.001 & 1.390 & 1.9981 & 0.0004 & 0.01  & 0.010 & 0.0000 \\\hline
         &       &       & \multicolumn{6}{c}{$k=10,\rho^{-1}\eta(10)=4.881$}                       &       &  \\
    \multicolumn{1}{c}{$b$} & \multicolumn{1}{c}{$n^*$} & \multicolumn{1}{c}{$\bar{N}$} & \multicolumn{1}{c}{$s.e._{\bar{N}}$} &    \multicolumn{1}{c}{$\bar{N}/n^*$}    &  \multicolumn{1}{c}{$\bar{N}-n^*$}       & \multicolumn{1}{c}{$\bar{\hat{\sigma}}$} & \multicolumn{1}{c}{$s.e._{\hat{\sigma}}$} & \multicolumn{1}{c}{$r^*$} & \multicolumn{1}{c}{$\bar{r}$} & \multicolumn{1}{c}{$s.e._{\bar{r}}$} \\
    \hline
    0.4   & 40    & 45.419 & 0.106 & 1.135 & 5.419 & 1.9460 & 0.0023 & 0.40  & 0.343 & 0.0006 \\
    0.2   & 80    & 84.519 & 0.160 & 1.056 & 4.519 & 1.9642 & 0.0017 & 0.20  & 0.186 & 0.0002 \\
    0.1   & 160   & 164.922 & 0.209 & 1.031 & 4.922 & 1.9845 & 0.0011 & 0.10  & 0.096 & 0.0001 \\
    0.08  & 200   & 204.933 & 0.238 & 1.025 & 4.933 & 1.9868 & 0.0010 & 0.08  & 0.077 & 0.0000 \\
    0.04  & 400   & 405.389 & 0.321 & 1.013 & 5.389 & 1.9945 & 0.0007 & 0.04  & 0.039 & 0.0000 \\
    0.02  & 800   & 804.934 & 0.453 & 1.006 & 4.934 & 1.9964 & 0.0005 & 0.02  & 0.020 & 0.0000 \\
    0.01  & 1600  & 1605.252 & 0.633 & 1.003 & 5.252 & 1.9985 & 0.0004 & 0.01  & 0.010 & 0.0000 \\\hline
        &       &       & \multicolumn{6}{c}{$k=20,\rho^{-1}\eta(20)=11.229$}                       &       &  \\
    \multicolumn{1}{c}{$b$} & \multicolumn{1}{c}{$n^*$} & \multicolumn{1}{c}{$\bar{N}$} & \multicolumn{1}{c}{$s.e._{\bar{N}}$} &    \multicolumn{1}{c}{$\bar{N}/n^*$}    &  \multicolumn{1}{c}{$\bar{N}-n^*$}       & \multicolumn{1}{c}{$\bar{\hat{\sigma}}$} & \multicolumn{1}{c}{$s.e._{\hat{\sigma}}$} & \multicolumn{1}{c}{$r^*$} & \multicolumn{1}{c}{$\bar{r}$} & \multicolumn{1}{c}{$s.e._{\bar{r}}$} \\
    \hline
    0.4   & 40    & 56.428 & 0.058 & 1.411 & 16.428 & 1.984 & 0.0020 & 0.40  & 0.282 & 0.0005 \\
    0.2   & 80    & 91.143 & 0.167 & 1.139 & 11.143 & 1.973 & 0.0020 & 0.20  & 0.174 & 0.0002 \\
    0.1   & 160   & 171.74 & 0.219 & 1.073 & 11.740 & 1.988 & 0.0010 & 0.10  & 0.093 & 0.0001 \\
    0.08  & 200   & 211.645 & 0.239 & 1.058 & 11.645 & 1.989 & 0.0010 & 0.08  & 0.075 & 0.0001 \\
    0.04  & 400   & 410.973 & 0.328 & 1.027 & 10.973 & 1.994 & 0.0010 & 0.04  & 0.039 & 0.0000 \\
    0.02  & 800   & 811.845 & 0.454 & 1.015 & 11.845 & 1.997 & 0.0010 & 0.02  & 0.020 & 0.0000 \\
    0.01  & 1600  & 1610.965 & 0.636 & 1.007 & 10.965 & 1.998 & 0.0000 & 0.01  & 0.010 & 0.0000 \\\hline
    \end{tabular}}
  \label{sim_results}%
\end{table}%


\setcounter{equation}{0}
\section{Online Sales Examination}\label{Sect. 4}

To illustrate the practical applicability of our newly proposed sequential learning procedure, we implement it to examine the influential factors of online sales using data from Tmall.com, one of the largest online platforms for local Chinese and international businesses to sell brand\zhao{-}name products to consumers in China. Tmall operates in a similar way as Amazon, except that it is a dedicated business-to-consumer (B2C) e-commerce website while Amazon is both a B2C and C2C (consumer-to-consumer) platform. We use part of the sales data collected by \cite{Huang and Pape (2020)} through a web scraping program. Collecting data from websites can be time-consuming and costly. A typical question faced by researchers is to determine an appropriate sample size for their analysis. This is a perfect case to illustrate our sequential learning procedure in determining when the web scraping can stop. For illustrative purposes, we focus on two random sellers of ten electronic appliances, and our sequential learning procedure will collect additional sales information first from more products (up to 10) and then along the time dimension. The data set consists of the following variables: $Y$, daily sales of a product from a seller; $X_1$, the product retail price; $X_2$, the lagged total reviews; $X_3$, the number of daily reviews; $X_4$, the average product rating; $X_5$, the seller-specific description rating; $X_6$, the seller-specific service rating; $X_7$, the seller-specific shipping rating; $X_8$, the user grade based on a consumer's past purchasing experiences and sellers' feedback ratings; $X_9$, the number of images posted in the daily reviews; $X_{10}$, the number of characters in the daily reviews; $X_{11}$, the number of follow-up reviews in the daily reviews; $X_{12}$, the number of characters in the follow-up reviews; and $D$, a dummy variable with $1$ indicating the first seller and $0$ otherwise. One can refer to \cite{Huang and Pape (2020)} for more background details.

Since the response variable $Y$ and the explanatory variables $X_i,i=1,2,...,12$ all take nonnegative values, we apply the Box-Cox shifted power transformation to reduce skewness: 
$$Y^\prime = \ln(Y+1), \text{ and } X_i^{\prime} = \ln(X_i+1), i = 1,2,...,12.$$
Then, we construct the following linear model for examining the daily sales:
\begin{equation}\label{lm}
Y^\prime = \beta_0 + \sum_{i=1}^{12}\beta_iX^\prime_i + \beta_{13}D + \varepsilon.
\end{equation}

In this scenario, $p=14$. Let $b=0.01$ be the target bounded risk. Since there are two sellers whose products are both recorded every day, we pick $k=2$ as the number of observations to be collected at a time in the sequential sampling stage. By further fixing $m_0=10$, we start with a pilot sample of size $m=m_0k+p = 34$. A small value of $\rho=0.5$ is chosen to facilitate the operation of the sequential learning procedure, which terminates with a final sample size of 156. The regression result is displayed in Table \ref{sales} column (1). The \textit{p}-values of the Shapiro-Wilk test on the standardized residuals is 0.08471, indicating that the normality assumption on error terms is not violated. The \textit{p}-value of the Breusch-Pagan test to assess homoscedasticity is 0.05885, which means we can still keep the homoscedasticity assumption with a close call. The Breusch-Godfrey test for autocorrelation returns a \textit{p}-value of 0.6025, so the error terms can be considered independent. These statistical tests validate our model assumptions.

According to the estimates in column (1), we see that a lower price level, more daily reviews, higher product ratings, and higher seller ratings tend to increase product sales. Note that the Tmall market is a highly competitive B2C market, which means that there are hundreds of sellers for one specific product, and the electronic products sold in this market have relatively high quality and are standardized. The market environment is therefore close to perfect competition. As a result, the explanatory variable price could be considered exogenous in the above regression, and the negative coefficient in front of it suggests that lower-priced products tend to have larger markets. Also, it is easy to observe that the linear model (1) includes a few explanatory variables that are statistically insignificant, such as the lagged total reviews, user grade, image posts, and review characters. For comparative purposes, we also construct some alternative linear models with fewer explanatory variables shown in columns (2) and (3) of Table \ref{sales}. We still have the same findings.

\begin{table}[h]
  \centering
  \caption{Regression results of linear models for examining online sales}
   \smallskip
   \resizebox{\columnwidth}{!}{%
    \begin{tabular}{llll}
    \hline
    \multicolumn{4}{l}{Response variable: $Y^\prime=\ln(\text{Daily sales}+1)$}\\
    \hline
    Explanatory variables & (1) & (2) & (3) \\
    \hline
    Intercept & $-83.03\threeS(20.58)$ & $-83.06\threeS(13.75)$ & $-81.49\threeS(13.61)$ \\
    $X_1^\prime=\ln(\text{Price}+1)$ & $-0.1933\oneS(0.1158)$ & $-0.1933\threeS(0.0721)$ & $-0.1894\threeS(0.0718)$\\
    $X_2^\prime=\ln(\text{Lagged total reviews}+1)$ & $-0.0087(0.0991)$ & & \\
    $X_3^\prime=\ln(\text{Daily reviews}+1)$ & $0.6284\threeS(0.0690)$ & $0.6267\threeS(0.0619)$ & $0.6600\threeS(0.0470)$\\
    $X_4^\prime=\ln(\text{Product rating}+1)$ & $35.37\threeS(8.807)$ & $35.34\threeS(6.543)$ & $34.15\threeS(6.375)$\\
    $X_5^\prime=\ln(\text{Description rating}+1)$ & $76.24\twoS(35.89)$ & $76.34\threeS(28.36)$ & $72.91\twoS(28.02)$\\
    $X_6^\prime=\ln(\text{Service rating}+1)$ & $-159.9\threeS(46.00)$ & $-158.3\threeS(38.20)$ & $-153.0\threeS(37.63)$\\
    $X_7^\prime=\ln(\text{Shipping rating}+1)$ & $96.26\threeS(18.73)$ & $94.61\threeS(16.63)$ & $93.09\threeS(16.51)$\\
    $X_8^\prime=\ln(\text{User grade}+1)$ & $0.0210(0.0792)$ & $0.0320(0.0387)$ & \\
    $X_9^\prime=\ln(\text{Image posts}+1)$ & $-0.0128(0.0519)$ & & \\
    $X_{10}^\prime=\ln(\text{Follow-up reviews}+1)$ & $0.0313(0.221)$ & & \\
    $X_{11}^\prime=\ln(\text{Review characters}+1)$ & $0.0085(0.0587)$ & & \\
    $X_{12}^\prime=\ln(\text{Follow-up rev. char.}+1)$ & $0.0056(0.0609)$ & & \\
    $D=1(0)$: Seller 1(0) & $0.5415\threeS(0.1737)$ & $0.5304\threeS(0.1098)$ & $0.5277\threeS(0.1096)$\\
    $R^2$ & 0.7819 & 0.7813 & 0.7803\\
    Adjusted $R^2$ & 0.7620 & 0.7694 & 0.7699 \\
    \hline
    \multicolumn{4}{l}{$\threeS p<0.01; \twoS p<0.05; \oneS p<0.1$. Standard errors are reported in parentheses.}
    \end{tabular}}
  \label{sales}%
\end{table}%

One may feel surprised that while the service rating variable $X_6^\prime$ is statistically significant, the sign turns out to be negative, indicating that better service leads to fewer sales. This counterintuitive estimate is due to the issue of multicollinearity. The following correlation matrix in Table \ref{cormat} shows that the three seller-specific variables, i.e., description rating variable $X_5^\prime$, service rating variable $X_6^\prime$, and shipping rating variable $X_7^\prime$, are highly correlated. 

\begin{table}[h]
  \centering
  \caption{Correlation matrix of the seller-specific rating variables}
   \smallskip
   {\footnotesize
   \begin{tabular}{|c|c|c|c|}
   \hline
   	& $X_5^\prime$ & $X_6^\prime$ & $X_7^\prime$\\
   \hline
   $X_5^\prime$ & 1 & 0.9727 & 0.9647\\
   \hline
   $X_6^\prime$ & 0.9727 & 1 & 0.9307\\
   \hline
   $X_7^\prime$ & 0.9647 & 0.9307 & 1\\
   \hline
   \end{tabular}
   }
  \label{cormat}%
\end{table}%

Along the lines of \cite{Huang and Pape (2020)}, to avoid the multicollinearity, we include only one of the three seller-specific rating variables at a time into the linear model of column (3) in Table \ref{sales}. The corresponding regression results are displayed in Table \ref{sales2}. Now, we observe that three seller-specific rating variables in models (4)-(6) all have positive signs. However, only the shipping rating remains significant. This may indicate that consumers are more concerned about the sellers' shipping speed rather than the sellers' description or service. As the $R^2$ value does not decrease dramatically compared with the full-specification model in column (1) of Table \ref{sales}, we suggest using linear model (6) to predict online sales for simplicity.  

\begin{table}[h]
  \centering
  \caption{Regression results of reduced linear models for examining online sales}
   \smallskip
   \resizebox{\columnwidth}{!}{%
    \begin{tabular}{llll}
    \hline
    \multicolumn{4}{l}{Response variable: $Y^\prime=\ln(\text{Daily sales}+1)$}\\
    \hline
    Explanatory variables & (4) & (5) & (6) \\
    \hline
    Intercept & $-47.02\threeS(9.047)$ & $-45.84\threeS(8.315)$ & $-49.93\threeS(8.263)$ \\
    $X_1^\prime=\ln(\text{Price}+1)$ & $-0.2727\threeS(0.0772)$ & $-0.2701\threeS(0.0770)$ & $-0.2434\threeS(0.0762)$\\
    $X_3^\prime=\ln(\text{Daily reviews}+1)$ & $0.7286\threeS(0.0437)$ & $0.7350\threeS(0.0445)$ & $0.7473\threeS(0.0434)$\\
    $X_4^\prime=\ln(\text{Product rating}+1)$ & $23.63\threeS(5.661)$ & $22.29\threeS(6.006)$ & $18.10\threeS(5.644)$\\
    $X_5^\prime=\ln(\text{Description rating}+1)$ & $4.706(5.393)$ &  &  \\
    $X_6^\prime=\ln(\text{Service rating}+1)$ &  & $5.382(4.927)$ & \\
    $X_7^\prime=\ln(\text{Shipping rating}+1)$ &  &  & $11.75\twoS(4.574)$\\
    $D=1(0)$: Seller 1(0) & $0.2190\threeS(0.0814)$ & $0.5304\threeS(0.1098)$ & $0.2629\threeS(0.0817)$\\
    $R^2$ & 0.7310 & 0.7318 & 0.7410\\
    Adjusted $R^2$ & 0.7220 & 0.7228 & 0.7324 \\
    \hline
    \multicolumn{4}{l}{$\threeS p<0.01; \twoS p<0.05; \oneS p<0.1$. Standard errors are reported in parentheses.}
    \end{tabular}}
  \label{sales2}%
\end{table}%


\setcounter{equation}{0}
\section{Proofs}\label{Sect. 5}

In this section, we lay out the proof of Theorem \ref{Thm}. Note that the stopping time $T_{\mathcal{P}(\rho,k)}$ from \eqref{proc} can be rewritten as
\begin{equation*}
\begin{split}
T_{\mathcal{P}(\rho,k)} &= \inf\left\{n\ge0: (m+kn)(m+kn-p)\ge \rho b^{-1}p \sigma^2 (m+kn-p)\sigma^{-2}S^2_{m+kn}\right\}\\
& = \inf\left\{ n\ge0: k(n+m_0)\left[k(n+m_0)-p\right] \ge \rho n^* \sum_{i=1}^{n+m_0}U_i \right\}.
\end{split}
\end{equation*}
Defining $t=T_{\mathcal{P}(\rho,k)}+m_0$, we have
\begin{equation}\label{t}
\begin{split} 
t & = \inf\left\{ n\ge m_0: kn\left(kn-p\right) \ge \rho n^* \sum_{i=1}^{n}U_i \right\}\\
& = \inf\left\{ n\ge m_0: (kn)^{-1}\sum_{i=1}^{n}U_i \le (kn/\rho n^*)(1-p(kn)^{-1}) \right\},
\end{split}
\end{equation}
where $U_i\equiv\sum_{j=(i-1)k+1}^{ik}W_j,i=1,...,n$, and $W_j,j=1,...,kn$ are i.i.d. $\chi^2_1$ random variables. Now, the stopping time $t$ in \eqref{t} has the same form as (2.1) of \cite{Hu and Zhuang (2022)}. By their Theorem 2.1, under the limit operations \eqref{limop} we have that
\begin{equation}\label{Et}
\begin{split}
\mathsf{E}[kt-\rho n^*] &= \frac{k}{2}-1-p-\sum_{n=1}^{\infty}n^{-1}\mathsf{E}\left[ \left\{ \chi^2_{kn}-2kn \right\}^{+}\right]+o(1)\\
& = \eta(k)-p+o(1).
\end{split}
\end{equation}
Since $N^*_{\mathcal{P}(\rho,k)} = \rho^{-1}[m+k(t-m_0)] = \rho^{-1}(kt+p)$, then \eqref{Et} leads to
\begin{equation*}
\begin{split}
\mathsf{E}\left[ N^*_{\mathcal{P}(\rho,k)}-n^* \right] = \rho^{-1}\eta(k)+o(1).
\end{split}
\end{equation*}
One may refer to \cite{Woodroofe (1977)} or Section A.4 of the Appendix in \cite{Mukhopadhyay and de Silva (2009)} for more details, as well. The proof of \eqref{eff} is complete.

Next, we set out to show \eqref{regret}. In light of \cite{Mukhopadhyay (1974)}, we have the following crucial lemma. Since a proof of the lemma can be found in \citet[Section 12.6.1]{Mukhopadhyay and de Silva (2009)}, we leave out many details for brevity. 
\begin{lemma}\label{lemma1}
For all fixed $n \ge m$, $\bm{\hat{\beta}}_n$ and $(S_m^2,...,S_n^2)$ are independent.
\end{lemma}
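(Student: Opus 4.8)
\textbf{Proof plan for Lemma \ref{lemma1}.} The plan is to peel off the orthogonal decomposition of the Gaussian vector $\bm{Y}_n$ induced by the projection onto the column space of $\mathbb{X}_n$, to show that $\hat{\bm{\beta}}_n$ depends only on the fitted component while every $S_j^2$ with $m\le j\le n$ depends only on the residual component, and then to invoke the fact that uncorrelated jointly Gaussian blocks are independent.

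First I would fix notation. For $p+1\le j\le n$ let $H_j=\mathbb{X}_j(\mathbb{X}_j^{\prime}\mathbb{X}_j)^{-1}\mathbb{X}_j^{\prime}$ be the $j\times j$ hat matrix (well defined under the standing full-rank assumption on the design), and let $E_j=[\,\mathbb{I}_{j\times j}\mid\mathbb{0}_{j\times(n-j)}\,]$ be the selection matrix picking out the first $j$ coordinates, so that $\bm{Y}_j=E_j\bm{Y}_n$ and --- the structural fact that matters --- $\mathbb{X}_j=E_j\mathbb{X}_n$. The stage-$j$ residual vector, re-embedded into $\mathbb{R}^n$, is $\bm{r}_j:=M_j\bm{Y}_n$ with $M_j:=E_j^{\prime}(\mathbb{I}_{j\times j}-H_j)E_j$, and then $(j-p)S_j^2=\bm{r}_j^{\prime}\bm{r}_j$, since $E_j^{\prime}$ is an isometry onto its range.

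Second I would record the two identities that do all the work. From $\mathbb{X}_j^{\prime}(\mathbb{I}_{j\times j}-H_j)=\mathbb{0}$ one gets at once both $\mathbb{X}_n^{\prime}M_j=\mathbb{X}_j^{\prime}(\mathbb{I}_{j\times j}-H_j)E_j=\mathbb{0}$ and $M_j\mathbb{X}_n=E_j^{\prime}(\mathbb{I}_{j\times j}-H_j)\mathbb{X}_j=\mathbb{0}$. With $H_n=\mathbb{X}_n(\mathbb{X}_n^{\prime}\mathbb{X}_n)^{-1}\mathbb{X}_n^{\prime}$, the second identity gives $M_jH_n=\mathbb{0}$, hence $M_j=M_j(\mathbb{I}_{n\times n}-H_n)$, so that $\bm{r}_j=M_j(\mathbb{I}_{n\times n}-H_n)\bm{Y}_n$ is a fixed linear image of $(\mathbb{I}_{n\times n}-H_n)\bm{Y}_n$, simultaneously for every $j=m,\dots,n$. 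Dually, $\mathbb{X}_n^{\prime}H_n=\mathbb{X}_n^{\prime}$ yields $\hat{\bm{\beta}}_n=(\mathbb{X}_n^{\prime}\mathbb{X}_n)^{-1}\mathbb{X}_n^{\prime}\bm{Y}_n=(\mathbb{X}_n^{\prime}\mathbb{X}_n)^{-1}\mathbb{X}_n^{\prime}H_n\bm{Y}_n$, a linear image of $H_n\bm{Y}_n$. Then, since $\bm{Y}_n\sim N_n(\mathbb{X}_n\bm{\beta},\sigma^2\mathbb{I}_{n\times n})$, the stacked vector $(H_n\bm{Y}_n,\,(\mathbb{I}_{n\times n}-H_n)\bm{Y}_n)$ is jointly Gaussian with cross-covariance $\sigma^2H_n(\mathbb{I}_{n\times n}-H_n)=\mathbb{0}$; uncorrelated Gaussian blocks are independent, so $\hat{\bm{\beta}}_n$, a function of $H_n\bm{Y}_n$, is independent of $(\bm{r}_m,\dots,\bm{r}_n)$, a function of $(\mathbb{I}_{n\times n}-H_n)\bm{Y}_n$, and in particular independent of the measurable transform $(S_m^2,\dots,S_n^2)$ of the latter.

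The one place to tread carefully --- and what I expect to be the main obstacle in writing a clean argument rather than a hard one --- is the nesting: the statistics $S_j^2$ for $j<n$ are built from the strict sub-vectors $\bm{Y}_j$ of $\bm{Y}_n$, so one cannot merely cite the single-stage ``coefficient estimator $\perp$ residuals'' fact. The resolution is exactly the identity $\mathbb{X}_n^{\prime}M_j=\mathbb{0}$: each earlier-stage residual space, after the embedding $E_j^{\prime}$, still lies inside $\mathrm{col}(\mathbb{X}_n)^{\perp}=\mathrm{range}(\mathbb{I}_{n\times n}-H_n)$, which is what collapses all the $\bm{r}_j$ into functions of the single residual vector $(\mathbb{I}_{n\times n}-H_n)\bm{Y}_n$. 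Everything else is routine Gaussian bookkeeping, and full details are available in \citet[Section 12.6.1]{Mukhopadhyay and de Silva (2009)}.
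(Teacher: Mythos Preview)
Your argument is correct: the key observation that each earlier-stage residual operator $M_j$ annihilates $\mathbb{X}_n$ (via $\mathbb{X}_j=E_j\mathbb{X}_n$) and hence factors through $\mathbb{I}_{n\times n}-H_n$ is exactly what is needed, and the rest is standard Gaussian orthogonality. The paper itself gives no details and simply defers to \citet[Section 12.6.1]{Mukhopadhyay and de Silva (2009)}, so your write-up is in fact more complete than the paper's own treatment while following the same classical route.
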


Observe that in terms of the sequential learning procedure \eqref{proc}, for all $n\ge m$, the event $\mathsf{I}\{N_{\mathcal{P}(\rho,k)}=n\}$ depends on $(S_m^2,...,S_n^2)$ alone, where $\mathsf{I}\{ A \}$ is the indicator function of an event $A$. By Lemma \ref{lemma1}, therefore, $\mathsf{I}\{N_{\mathcal{P}(\rho,k)}=n\}$ and $\bm{\hat{\beta}}_n$ are independent. Then, the achieved risk can be written as
\begin{equation}\label{AchRisk}
\begin{split}
& \mathsf{E}\left[ L\left( \bm{\beta},\hat{\bm{\beta}}_{N_{\mathcal{P}(\rho,k)}} \right)  \right]\\
= \ & \sum_{n=m}^{\infty} \mathsf{E}\left[ \left. L\left( \bm{\beta},\hat{\bm{\beta}}_{N_{\mathcal{P}(\rho,k)}} \right) \right| N_{\mathcal{P}(\rho,k)}=n \right]\mathsf{P}\left(N_{\mathcal{P}(\rho,k)}=n\right)\\
= \ & \sum_{n=m}^{\infty} \mathsf{E}\left[ L\left( \bm{\beta},\hat{\bm{\beta}}_{n} \right) \right]\mathsf{P}\left(N_{\mathcal{P}(\rho,k)}=n\right)\\
= \ & \sum_{n=m}^{\infty} n^{-1}p\sigma^2\mathsf{P}\left(N_{\mathcal{P}(\rho,k)}=n\right) = p\sigma^2\mathsf{E}\left[N_{\mathcal{P}(\rho,k)}^{-1}\right] = b\mathsf{E}\left[n^{*}N_{\mathcal{P}(\rho,k)}^{-1}\right]\\
= \ & b\mathsf{E}\left[n^{*}N_{\mathcal{P}(\rho,k)}^{-1}\mathsf{I}\left\{N_{\mathcal{P}(\rho,k)} \le \frac{1}{2}n^* \right\}\right] + b\mathsf{E}\left[n^{*}N_{\mathcal{P}(\rho,k)}^{-1}\mathsf{I}\left\{N_{\mathcal{P}(\rho,k)} > \frac{1}{2}n^* \right\}\right]. 
\end{split}
\end{equation} 

To evaluate $\mathsf{E}\left[n^{*}N_{\mathcal{P}(\rho,k)}^{-1}\mathsf{I}\left\{N_{\mathcal{P}(\rho,k)} \le \frac{1}{2}n^* \right\}\right]$, we introduce a useful lemma below.
\begin{lemma}\label{lemma2}
For the sequential learning procedure $\mathcal{P}(\rho,k)$ given in \eqref{proc}, under the limit operations \eqref{limop} we have: 
\begin{equation*}
\mathsf{P}\left( N_{\mathcal{P}(\rho,k)} \le \gamma n^*  \right) = O(n^{*-\frac{s}{2r}}),
\end{equation*}
for any $s \ge 2$, where $0<\gamma<1$ is a fixed proportion.
\end{lemma}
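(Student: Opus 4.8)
The plan is to bound the left tail of $N_{\mathcal{P}(\rho,k)}$ by translating the event $\{N_{\mathcal{P}(\rho,k)} \le \gamma n^*\}$ into a statement about the partial sums of the i.i.d.\ $\chi^2_1$ variables $W_j$ and then applying a large-deviation (Markov-type) estimate on those sums. Recalling from \eqref{t} that, with $t = T_{\mathcal{P}(\rho,k)} + m_0$ and $N^*_{\mathcal{P}(\rho,k)} = \rho^{-1}(kt+p)$, we have $N_{\mathcal{P}(\rho,k)} = \lfloor N^*_{\mathcal{P}(\rho,k)}\rfloor + 1$, so $N_{\mathcal{P}(\rho,k)} \le \gamma n^*$ forces $kt \le \rho\gamma n^*$ for all small $b$ (absorbing the $p$ and the rounding into the constant since $\gamma<1$ is fixed and $n^* \to \infty$). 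First I would note that, by definition of the infimum in \eqref{t}, the event $\{t \le n_\gamma\}$ for a threshold $n_\gamma \asymp \rho\gamma n^*/k$ means the stopping boundary was crossed at some $n$ with $m_0 \le n \le n_\gamma$, i.e.\ there exists such an $n$ with $\sum_{i=1}^{n} U_i \le k n (kn - p)/(\rho n^*)$. On this event, taking $n$ as large as it can be (namely close to $n_\gamma$) makes the right-hand side largest, so the event is contained in $\bigl\{\sum_{i=1}^{n_\gamma} W_j \le c\, \gamma^2 n^*\bigr\}$ for an appropriate constant $c$ (roughly $k\gamma/\rho$ after simplification), while $\mathsf{E}\bigl[\sum_{i=1}^{k n_\gamma} W_j\bigr] = k n_\gamma \asymp \gamma n^*$. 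Since $\gamma<1$, this is a lower-tail deviation of a sum of $\chi^2_1$ variables below a constant fraction of its mean.

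The key step is then a Markov/Chernoff bound on that lower tail. Writing $Q_{n_\gamma} = \sum_{j=1}^{k n_\gamma} W_j \sim \chi^2_{k n_\gamma}$, I would use $\mathsf{P}(Q_{n_\gamma} \le a)$ for $a$ a fixed fraction of $k n_\gamma$ below $1$: by a standard exponential-moment argument (or directly from known bounds on the left tail of a chi-square), $\mathsf{P}(Q_{n_\gamma} \le \delta k n_\gamma) \le \exp\bigl(-c' k n_\gamma\bigr) = \exp(-c'' n^*)$ for constants depending only on $\delta, k, \rho, \gamma$. An exponentially small bound in $n^*$ is, of course, far smaller than the claimed polynomial rate $O(n^{*-s/(2r)})$, so the claim will follow once we also handle the union over $m_0 \le n \le n_\gamma$ — this is where a Bonferroni bound over at most $n_\gamma = O(n^*)$ values multiplies an exponentially small term by a polynomial factor, still leaving something exponentially small in $n^*$, hence certainly $O(n^{*-s/(2r)})$ for every fixed $s \ge 2$. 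Alternatively, and perhaps more cleanly, one avoids the union bound entirely: the first-crossing event $\{t \le n_\gamma\}$ already implies $Q_{n} \le$ (boundary at $n$) for \emph{that} $n$, and since the boundary value $kn(kn-p)/(\rho n^*)$ is increasing in $n$ on the relevant range, it is at most its value at $n_\gamma$, giving directly $\{t\le n_\gamma\} \subseteq \{Q_{n_\gamma} \le c\gamma^2 n^*\}$ with $n_\gamma$ fixed — no union needed — because the partial sums $Q_n$ are nondecreasing in $n$. That monotonicity of $Q_n$ in $n$ is the crucial structural observation that makes the single-index bound work.

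I anticipate the main obstacle is purely bookkeeping with the $O(\cdot)$'s and the rounding/intercept corrections: making sure that ``$N_{\mathcal{P}(\rho,k)} \le \gamma n^*$'' really does imply ``$Q_{n_\gamma} \le (\text{fraction} < 1)\cdot \mathsf{E}[Q_{n_\gamma}]$'' with a strictly-less-than-one fraction that does not degrade to $1$ as $b\downarrow 0$. Because $\gamma < 1$ is fixed and $m/n^* \to$ something $<\rho$ by \eqref{limop}, all the lower-order terms ($p$, the $+1$ from the floor, the $m_0$ shift) are negligible relative to $n^*$, so the deviation ratio converges to a constant strictly below $1$; I would state this carefully once and then let the Chernoff bound do the rest. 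A secondary point is simply to quote the correct form of the chi-square lower-tail bound — e.g.\ $\mathsf{P}(\chi^2_\nu \le \delta \nu) \le (\delta e^{1-\delta})^{\nu/2}$ for $0<\delta<1$, which is $\le e^{-c\nu}$ with $c = c(\delta)>0$ — and then observe $\nu = k n_\gamma \asymp n^*$. Given an exponential bound in $n^*$, the stated polynomial rate $O(n^{*-s/(2r)})$ is immediate for any $s\ge 2$, which also explains why the lemma is phrased with that weaker (polynomial) conclusion: it is all that is needed downstream in \eqref{AchRisk}, where the first term is $b\,\mathsf{E}[n^* N_{\mathcal{P}(\rho,k)}^{-1}\mathsf{I}\{N_{\mathcal{P}(\rho,k)} \le \tfrac12 n^*\}] \le b\, n^* \cdot \mathsf{P}(N_{\mathcal{P}(\rho,k)} \le \tfrac12 n^*)$, and choosing $s$ large enough (relative to $r$) makes this $o(b)$.
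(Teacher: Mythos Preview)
Your union-bound-plus-Chernoff approach is essentially correct and yields a stronger conclusion than the lemma states, but two points need tightening. First, in the union $\bigcup_{m_0\le n\le n_\gamma}\{Q_n \le kn(kn-p)/(\rho n^*)\}$ the \emph{worst} (largest-probability) term sits at $n=m_0$, not at $n=n_\gamma$; the Chernoff bound there is $\exp\bigl(-c(\gamma)\,km_0\bigr)=\exp\bigl(-c(\gamma)(m-p)\bigr)$, exponentially small in $m$ rather than in $n^*$. Since $n^*=O(m^r)$ forces $m\gtrsim n^{*1/r}$, this is still super-polynomially small in $n^*$, so the claimed $O(n^{*-s/(2r)})$ follows for every $s\ge 2$. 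Second, your ``cleaner alternative'' avoiding the union bound is wrong: from $\{t\le n_\gamma\}$ you correctly extract $Q_t\le$ (boundary at $t$) $\le$ (boundary at $n_\gamma$), but the monotonicity $Q_{n_\gamma}\ge Q_t$ goes the \emph{wrong} way, so you cannot conclude $Q_{n_\gamma}\le c\gamma^2 n^*$. There is no way to collapse to a single deterministic index here without a maximal inequality; keep the union bound.

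The paper takes a different, more classical route. From the stopping rule, on $\{N_{\mathcal{P}(\rho,k)}\le\gamma n^*\}$ one has $S^2_{m+kT_{\mathcal{P}(\rho,k)}}\le\gamma\sigma^2$, hence
\[
\max_{m\le t\le t_u}\bigl|S_t^2-\sigma^2\bigr|\ge(1-\gamma)\sigma^2,
\]
and Kolmogorov's inequality for the reverse martingale $\{S_n^2\}_{n\ge m}$ gives the bound $[(1-\gamma)\sigma^2]^{-s}\,\mathsf{E}|S_m^2-\sigma^2|^s=O(m^{-s/2})=O(n^{*-s/(2r)})$. This maximal inequality is precisely the device that legitimately replaces your failed single-index argument. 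Your Chernoff route is more elementary and sharper; the paper's reverse-martingale route is the standard tool in sequential analysis and makes transparent why the rate is tied to the pilot size $m$ via the exponent $s/(2r)$.
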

\begin{proof}
Note that on the set $\left\{ N_{\mathcal{P}(\rho,k)} \le \gamma n^* \right\}$, we have
\begin{equation*}
\begin{split}
N^*_{\mathcal{P}(\rho,k)} \le \gamma n^*\Rightarrow m+kT_{\mathcal{P}(\rho,k)} \le \left\lfloor \gamma \rho n^* \right\rfloor + 1 = t_u, \ \text{say.} 
\end{split}
\end{equation*}
Applying Kolmogorov's inequality for reverse martingales yields 
\begin{equation*}
\begin{split}
& \mathsf{P}\left( N_{\mathcal{P}(\rho,k)} \le \gamma n^*  \right) \le \mathsf{P}\left( m+kT_{\mathcal{P}(\rho,k)} \le \gamma n^* \right) \\
\le \ & \mathsf{P}\left( S^2_{m+kT_{\mathcal{P}(\rho,k)}} \le \gamma \sigma^2  \right) \le \mathsf{P}\left( \left| S^2_{m+kT_{\mathcal{P}(\rho,k)}}-\sigma^2 \right| \ge (1-\gamma) \sigma^2  \right)\\
\le \ & \mathsf{P}\left( \max_{m \le t \le t_u} \left| S^2_{t}-\sigma^2 \right| \ge (1-\gamma)\sigma^2  \right)\\
\le \ & \left[(1-\gamma)\sigma^2\right]^{-s}\mathsf{E}\left|S_m^2-\sigma^2\right|^s = O(m^{-\frac{s}{2}}) = O(n^{*-\frac{s}{2r}}),
\end{split}
\end{equation*}
for any $s \ge 2$.
\end{proof}

By Lemma \ref{lemma2}, we obtain
\begin{equation}\label{I}
\begin{split}
b\mathsf{E}\left[n^{*}N_{\mathcal{P}(\rho,k)}^{-1}\mathsf{I}\left\{N_{\mathcal{P}(\rho,k)} \le \frac{1}{2}n^* \right\}\right] \le \frac{bn^*}{m}\mathsf{P}\left(N_{\mathcal{P}(\rho,k)} \le \frac{1}{2}n^*\right)=O(n^{*-\frac{s+2}{2r}})=o(b),
\end{split}
\end{equation}
by selecting appropriate $s>2(r-1)$. To evaluate $\mathsf{E}\left[n^{*}N_{\mathcal{P}(\rho,k)}^{-1}\mathsf{I}\left\{N_{\mathcal{P}(\rho,k)} > \frac{1}{2}n^* \right\}\right]$, we focus on the asymptotic behavior of $n^{*}N_{\mathcal{P}(\rho,k)}^{-1}$. Recall the stopping rule defined in \eqref{proc}. On the one hand, we have
\begin{equation}\label{II1}
\begin{split}
N_{\mathcal{P}(\rho,k)} \ge N^*_{\mathcal{P}(\rho,k)} \ge b^{-1}pS^2_{m+kT_{\mathcal{P}(\rho,k)}} \Rightarrow n^{*}N^{-1}_{\mathcal{P}(\rho,k)} \le \sigma^2S^{-2}_{m+kT_{\mathcal{P}(\rho,k)}},
\end{split}
\end{equation}
which converges to 1 in probability as $b \to 0$. On the other hand,
\begin{equation}\label{II2}
\begin{split}
& N_{\mathcal{P}(\rho,k)} \le N^*_{\mathcal{P}(\rho,k)}+1 \le \rho^{-1}\left( k+\rho b^{-1}pS^2_{m+k\left(T_{\mathcal{P}(\rho,k)}-1\right)} \right) + 1\\
\Rightarrow \ & n^*N^{-1}_{\mathcal{P}(\rho,k)} \ge \sigma^2\left( S^2_{m+k\left(T_{\mathcal{P}(\rho,k)}-1\right)} + bp^{-1}(\rho^{-1}k+1) \right)^{-1}, 
\end{split}
\end{equation} 
which also converges to 1 in probability as $b \to 0$. Combining \eqref{II1}, \eqref{II2}, and Lemma \ref{lemma2}, we can claim that $n^{*}N_{\mathcal{P}(\rho,k)}^{-1}\mathsf{I}\left\{N_{\mathcal{P}(\rho,k)} > \frac{1}{2}n^* \right\}$ converges to 1 in probability as $b \to 0$. Since it is true that
$n^{*}N_{\mathcal{P}(\rho,k)}^{-1}\mathsf{I}\left\{N_{\mathcal{P}(\rho,k)} > \frac{1}{2}n^* \right\} \le 2$, the dominated convergence theorem guarantees that as $b \to 0$,
\begin{equation} \label{II}
\mathsf{E}\left[n^{*}N_{\mathcal{P}(\rho,k)}^{-1}\mathsf{I}\left\{N_{\mathcal{P}(\rho,k)} > \frac{1}{2}n^* \right\}\right] = 1 + o(1)
\end{equation}
Putting together \eqref{AchRisk}, \eqref{I} and \eqref{II}, we obtain the desired result \eqref{regret}.


\setcounter{equation}{0}
\section{Concluding Remarks}\label{Sect. 6}

Linear regression models are one of the most widely used statistical models with a vast of applications. Accurately estimating the parameters for the models is important since it affects how one would explain the effect from an explanatory variable to the response variable. In real life, it is not often the case that one would have the data ahead of time. On another note, researchers or practitioners would not wait an infinite amount of time collecting data or decide randomly whether the data are enough. Instead, one may seek a valid statistical learning approach to getting to a point where the sample is large enough to conduct accurate estimations. The proposed sequential learning procedure addresses this question. We assume that one would like to limit the risk of estimation and prefer the minimum required sample size. Our proposed procedure provides a strategic sampling plan that incorporates a stopping rule while it considers the estimation accuracy, sampling cost, and sampling logistics one would desire. It is proven to enjoy efficiency properties. And it is demonstrated using simulation studies and a real data example.

In this paper, we only consider the classic linear models under the Gauss-Markov setup with independent normal errors. Further research may extend to parameters estimation of a logistic regression model or even a more generalized linear regression model such as Poisson regression or negative binomial regression.


\end{document}